\documentclass[sigconf]{acmart}

\usepackage{booktabs} 
\usepackage{balance} 

\setcopyright{acmcopyright}

\acmDOI{http://dx.doi.org/10.1145/3054977.3054995}

\acmISBN{978-1-4503-4966-6/17/04}

\acmConference[IPSN 2017]{The 16th ACM/IEEE International Conference on Information Processing in Sensor Networks}            
{April 2017}{Pittsburgh, PA USA}

\begin{document}
\title[ADN]{ADN:  An  Information-Centric Networking Architecture  for the Internet of Things}

\author{J.J. Garcia-Luna-Aceves}
\affiliation{%
  \institution{University of California at Santa Cruz}
  \streetaddress{Computer Engineering Department}
  \city{Santa Cruz} 
  \state{CA} 
  \postcode{95064}
}
 \email{jj@soe.ucsc.edu }

\begin{abstract}
Forwarding data by name has been assumed   to be  a necessary aspect of an information-centric redesign of  the current Internet architecture  that makes content access, dissemination, and storage more efficient.    The Named Data Networking (NDN) and  Content-Centric Networking (CCNx) architectures are the leading examples of such an approach. However,  
forwarding data by name incurs storage and communication complexities that are orders of magnitude  larger than solutions based on forwarding data 
using addresses. Furthermore, the specific algorithms used in NDN and CCNx have been shown to have a number of  limitations. 
The  {\em Addressable Data Networking} (ADN) architecture is introduced as  an alternative  to  NDN and CCNx. ADN  is particularly attractive for large-scale deployments of  the Internet of Things (IoT), because it requires far less storage and processing in relaying nodes than NDN. 
ADN  allows things and data to be denoted by names, just like NDN and CCNx do. However, instead of replacing the waist of the Internet with named-data forwarding, ADN uses an address-based forwarding plane and introduces an information plane that 
seamlessly maps names to addresses without the involvement of end-user applications.
Simulation results illustrate the order of magnitude savings in complexity that can be attained with ADN compared to NDN.

\end{abstract}

%
%

\begin{CCSXML}
<ccs2012>
<concept>
<concept_id>10003033.10003034</concept_id>
<concept_desc>Networks~Network architectures</concept_desc>
<concept_significance>500</concept_significance>
</concept>
<concept>
<concept_id>10003033.10003034.10003035</concept_id>
<concept_desc>Networks~Network design principles</concept_desc>
<concept_significance>500</concept_significance>
</concept>
<concept>
<concept_id>10003033.10003034.10003035.10003037</concept_id>
<concept_desc>Networks~Naming and addressing</concept_desc>
<concept_significance>500</concept_significance>
</concept>
</ccs2012>
\end{CCSXML}

\ccsdesc[500]{Networks~Network architectures}
\ccsdesc[500]{Networks~Network design principles}
\ccsdesc[500]{Networks~Naming and addressing}



 \keywords{Content-centric networking, IoT, forwarding }

\maketitle

\section{Introduction}

Information-Centric Networking (ICN) architectures  \cite{icn-survey1, 
icn-survey3, icn-survey2}   have been proposed to improve the quality of information perceived by consumers compared to the current IP Internet. 
The Named Data Networking (NDN)  \cite{ndn} and the Content-Centric Networking (CCNx)  \cite{ccnx} architectures advocate the use of what has been called a ``stateful forwarding plane" \cite{ndn-fw2}.  In this approach,  consumers request content by name issuing Interests that state the names of the required content objects (CO). Routers maintain Forwarding Information Bases (FIB) listing the next hops to name prefixes and  use that information to forward Interests towards the producers of content.  The requested COs or feedback are sent back to consumers by means of Pending Interest Tables (PIT) that list the interfaces over which responses should be sent back for each Interest that has been forwarded by a router.

It is well documented by now  that  using FIBs to maintain routing state for each name prefix and using PITs to maintain per-Interest forwarding state 
requires far more storage than the traditional address-based routing and forwarding approach used in the Internet today 
\cite{dai-12,  peri-11, song15, var-13, vir-13, wahl13b}, considerable research has focused on trying to make stateful forwarding planes (i.e., having forwarding state for each Interest) more efficient  \cite{afan15, papa-14, panini16, so12, so13, song15, tsi14, var-12, var-13, wang12, wang13, yuan-12}.
The proponents of NDN and CCNx  have argued that the additional processing and storage  costs 
incurred by named-data forwarding are justified by the benefits derived from  it, which consist of:
(a) enabling adaptive forwarding disciplines that provide for loop-free multipath data retrieval and native support of multicast; (b)
providing efficient recovery from packet losses;  (c) allowing efficient flow balancing and congestion control; and (d) robustly detecting and recovering from forwarding problems.   

On the other hand,  we have proposed a few alternatives  that replace PITs with much smaller  tables  for  the forwarding of responses to Interests \cite{ifip2016, globe2016},  or replace both PITs and name-prefix FIBs with smaller  tables for the forwarding of both Interests and responses to them \cite{icn2016}.   
These prior results indicate that the performance benefits  ascribed to the use of a stateful forwarding plane in NDN and CCNx are not the result of using names instead of addresses for the forwarding of Interests and data. 
Motivated by these results and the need to make efficient use of storage and processing resources in IoT devices,  we introduce the   {\em Addressable Data Networking} (ADN) architecture. 
ADN  is an 
information-centric alternative to named-data forwarding that requires far less complexity in the forwarding plane.

Section \ref{sec-prelim}  summarizes of the operation of the forwarding plane  of NDN to provide the necessary context for the description of ADN. 

Section \ref{sec-ip} discusses the challenges that IoT poses to both the IP Internet and the NDN and CCNx architectures, and motivates the need for a fresh look at how information is discovered and switched in an IoT.

Section \ref{sec-adn}  introduces   the
{\em Addressable  Data Networking} (ADN) architecture in which Interests are forwarded based on the addresses of data and responses are sent back using 
addresses of sources of Interests. ADN  can be instantiated in a number of ways, including some of the other ICN architectures proposed to date \cite{icn-survey3,  ifip2016, icn2016} or modifications to the algorithms used in IPv4 and more importantly IPv6  today. 
Finding out what  the most efficient instantiation of such a forwarding plane would be  is beyond the scope of this paper and deserves further study.

Section \ref{sec-loop} 
shows that remembering the names and nonces for each forwarded Interest cannot ensure that every Interest is  consumed by either a  data packet or a NACK in the absence of packet losses, because  multiple Interests may be aggregated in PITs along forwarding loops. This  results in a new type of forwarding-deadlock problem in which Interests  ``wait to infinity" for responses that never come and can be discarded only after their PIT lifetimes expire. In turn, this can prevent  native support for synchronous or asynchronous multicast from working correctly, because Interests may fail to establish multicast forwarding trees for data.

Section \ref{sec-perf} presents the results of simulation experiments aimed at highlighting the benefits of using addresses rather than names for the forwarding of Interests.
Section \ref{sec-conc} discusses how ADN can implement fast recovery from packet losses and 
congestion-control mechanisms that are at least as responsive as those in NDN but without the need for per-Interest forwarding state, and summarizes
our final thoughts on  research directions for ADN.

\section{The NDN Forwarding Plane }
\label{sec-prelim}

Routers in NDN use Interests, data packets, and negative acknowledgments (NACK)  to exchange content \cite{ndn-fw2, ndn-paper}. 
An Interest is identified in NDN by the name $o$ of the CO being requested and a nonce created by the origin of the Interest. A data packet includes the CO name, a security payload, and the payload itself. A NACK carries the information needed to denote an Interest and a code stating the reason for the response. 
To process Interests, data packets, and NACKs, each router uses a content store (CS), 
a forwarding information base (FIB),  and a pending Interest table (PIT).  Fig. \ref{ndn-fig} illustrates the forwarding approach in NDN.

A CS is a cache for COs stored locally and indexed by their names.  
The FIB entry  for  a given name prefix  lists the stale time for the entry, and a list of interfaces ranked according to a forwarding policy.
The information for an interface includes a routing preference, a round-trip time (RTT), a status, and a rate limit \cite{ndn-fw2}. FIBs are populated using a content routing protocol  and  a router matches Interest names stating a specific CO   to FIB entries 
corresponding to prefix names using \emph{longest prefix match}. 
The entry in  the PIT  of a router for a given CO 
consists of  one or multiple tuples stating a nonce received in an Interest for the CO,  the incoming interface where it was received and a lifetime, and a list of the outgoing interfaces over which the Interest was forwarded and a send time.

\begin{figure}[h]
\includegraphics[height=2.1in, width=3.1in]{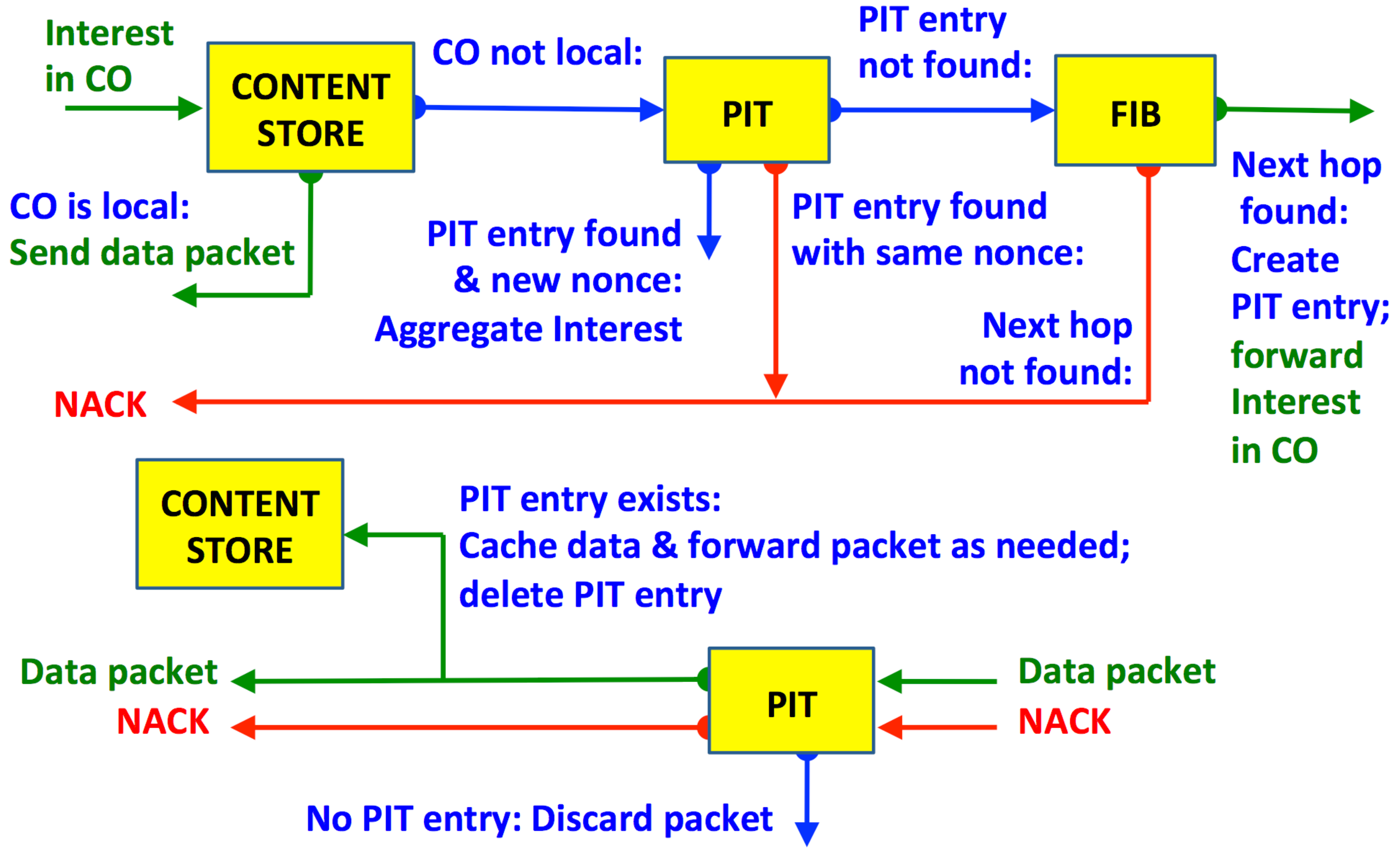}
\caption{Forwarding in NDN}
   \label{ndn-fig}
\end{figure}

When a router receives an Interest, it checks whether there is a match in its CS for the CO requested in the Interest. The Interest matching mechanisms used can vary, and to simplify the comparison between forwarding planes we can assume that exact Interest matching is used.  If a match to the Interest is found, the router sends back a data packet over the reverse path traversed by the Interest. If no match is found in the CS, the router determines whether the PIT stores an entry for the same content.   If  the Interest states a nonce that differs from those stored in the  PIT entry for the requested content, then the router ``aggregates" the Interest by adding the incoming interface from which the Interest was received and the nonce to the PIT entry without forwarding the Interest.  If the same nonce in the  Interest is already listed in the PIT entry for the requested CO, the router sends a NACK over the reverse path traversed by the Interest.
If a router does not find a match in its CS and PIT, the router forwards the Interest along a route (or routes) listed  in its FIB for the best prefix match.     
A NACK can be sent by a content provider if no match is found for the CO name stated in the Interest.

Based on the information in its FIB and PIT, each  router establishes an interface ranking \cite{ndn-probing}  to determine the interfaces that  should be used to forward Interests in order use the best paths over which content should be retrieved.  The ranking of interfaces classifies interfaces into classes based on perceived performance, and uses periodic measurement of performance and probing.

\section{Limitations of Approaches based on IP or NDN}
\label{sec-ip}

An IoT deployment involves large numbers of resource-constrained devices designed for low manufacturing cots and limited operational expenses. This results in  IoT devices that  are power constrained and have limited processing, storage and communication functionalities compared to routers and end systems connected to wired segments of the Internet. The following paragraphs point out a number of problems involved in using either the IP Internet stack or the NDN and CCNx architectures as they are currently defined to enable large-scale IoT deployments. We do not address the physical and link layers directly, because they must be designed to enable  small and cost effective IoT devices.

\subsection{Network-Layer IoT Challenges}

At the network layer, the IoT challenges stem from the mismatch between the physical  characteristics of devices and the requirements in the protocol stacks of  IP and NDN or CCNx to denote destinations and forward data to and from destinations.

Due to energy constraints of IoT devices,  the maximum size of packets at the link layer in IoT deployments needs to be very small. On the other hand, IoT applications are many and will just increase in numbers in the future, which means that naming of data must be expressive to satisfy their varied requirements. This mismatch is not easily solved with the IP or  NDN/CCNx architectures. 

IPv6 requires networks to support a minimum 
of 128-byte maximum transmission unit (MTU) to avoid packet fragmentation, and NDN uses application-friendly expressive names as an integral part of data forwarding functionality.
For IPv6, the mismatch between link-level MTU sizes and IPv6 introduces unnecessary adaptation-layer complexities. For NDN and CCNx, the use of large names for data forwarding 
incurs unnecessary processing and storage requirements in IoT relays, and 
introduces the need for fragmentation and its associated problems.

The limitations of the IP Internet and NDN or CCNx architectures go beyond the mismatch between  MTU lengths at layer two and packet headers and the length of names used to denote data. 
The forwarding mechanisms used in these architectures are not well suited for large-scale IoT deployments. 

Even though routing and forwarding in an IP network is based on fixed-length identifiers, routers maintain routes to all possible destinations  proactively and the forwarding plane has no means to reflect the ordering among routers that is inherent in the computation of routes to destinations. As a result, packets may traverse undetected forwarding loops and the best routers can do in such a case is drop packets after they traverse too many hops. Many of the proposals that have been advanced for routing in the context of IoT deployments rely on the same protocols designed for mobile ad hoc networks (MANET), or use a spanning tree of the network in which the routers near the root of the tree must maintain routes to most devices (e.g., RPL \cite{rpl}) and use source routing to avoid having to maintain large routing tables at some nodes. The end result is routing tables that become too large with entries for each host and packet headers that  can grow too large with source routes.

Another limitation of IP routing when applied to IoT deployments is the way in which multicasting is supported. A multicast routing protocol is needed to establish routing state for multicast groups, network-level addresses must be mapped to link-level addresses, and multicast sources end up transmuting at a rate that is acceptable to the slowest receiver. The latter is a consequence of the network not providing any in-network storage. The few alternatives to IP multicast that have been proposed for low-power and lossy networks amount to flooding, which is not acceptable in large-scale deployments.

On the other hand, routing in NDN and CCNx  eliminates the need to maintain routes for each host in the network, and  no additional multicast routing protocol is needed to establish and maintain multicast forwarding state.
However, some of the routing protocols proposed for NDN require routers to maintain information about the network topology and each name-prefix replica in the network, and more importantly  each router must maintain forwarding state for every Interest they forward, which may become onerous on some relays.
Furthermore, although Interests are prevented from traversing forwarding loops multiple times in NDN, forwarding deadlocks may be created in either NDN or CCNx that are just as harmful, as we discuss in Section \ref{sec-loop}.

\subsection{Transport-Layer IoT Challenges}

The connection-oriented approach to reliable data transfer and congestion control used in the IP Internet is not a good match for IoT deployments, because of the characteristics of IoT devices and the traffic induced by IoT applications. Many IoT devices may require to use on-off cycles to extend their  battery lives, and a considerable amount of IoT applications involve short transactions for which establishing connections  simply induced unnecessary latencies. In addition, the interaction between TCP  and link-level retransmissions or losses due to physical-layer effects of wireless links can make the performance of TCP suffer \cite{tulip}.

An approach to avoid the problems with TCP operating over an IoT network consists of using UDP as the transport protocol and implementing the retransmission and congestion-control functionalities in application libraries. Unfortunately, this makes application developers responsible for the design and implementation of functionality that should be transparent to applications and limits the ability of such applications to evolve in parallel with IoT technologies.

The NDN and CCNx architectures split the transport-level functionality between application libraries and the forwarding mechanism. Consumer applications are responsible for pacing the sources of data by the rates at which they submit Interest in data, and routers maintain per-Interest forwarding state to allow for error recovery and retransmissions. The limitations with this approach are that it incurs considerable forwarding-state  overhead in IoT routers, and just like servers can be the subject of  SYN flooding attacks designed to exploit the state kept at servers for the connection-establishment phase of TCP, malicious users can simply inject Interests aimed at overwhelming the forwarding state of relays.

\subsection{Application-Layer IoT Challenges}

Resource discovery and enabling opportunistic in-network caching are  essential components of the information plane of an IoT deployment. They allow applications to denote resources and services by name, and content to be delivered efficiently to resource-constrained devices through relays that also have resource constraints.

Resource discovery could be attained in the context of the protocol stack of the IP Internet using the DNS or augmenting  the routing protocol used in the network. However, both approaches have limitations.
Using the domain name system (DNS) for resource discovery in an IoT would require extensions to the DNS service discovery \cite{dns-ds} that include mechanisms for IoT devices and hosts to determine how to contact the directory servers in the network. Multicast DNS \cite{mdns} can be used to avoid having to configure hosts with the addresses of directory servers, but is applicable only in very small IoT deployments in which energy constraints of relaying nodes is not an issue.

Augmenting the routing protocol running in an IoT to support resource discovery functionality  is similar to the content routing protocols that have been proposed for NDN, in that the signaling of the routing protocol is used to disseminate information about network resources. A  virtual link can be assumed between a resource and the node hosting the resource, and link state updates can be used to disseminate the presence of the resource and obtain routes to it. 
This approach is not suitable for large-scale IoT deployments.  The signaling overhead incurred in disseminating resource information through the routing protocol is much larger than the overhead of traditional routing, which is not applicable for resource-constrained IoT relays. 

In-network caching of content is a major benefit of NDN and CCNx, as it enables content delivery from caching sites near the consumers of content. Although transparent caching is available in the protocol stack of the IP Internet, it requires caching at the edge by servers that intercept all requests, and assumes that resource discovery takes place (e.g., determining the address of the content provider in case of a cache miss). Whether the protocol stack of the IP Internet, NDN, or CCNx are used, a big challenge to address in the future is the ability to cache content opportunistically without the caching sites being able to read the cached content.

 \subsection{Implications on The Network Architecture}

From the previous discussion  we  argue that a large-scale  IoT deployment requires the introduction of an information plane and a switching plane that work in coordination but use separate mechanisms.  A switching plane can  allow the use of small MTUs friendly to the link levels of an IoT and data forwarding based on small MTUs, while an information plane can support expressive names needed by IoT applications.
Separating the information plane from the switching plane enables the use of simple routing and  forwarding mechanisms that can be implemented in resource-challenged devices, and enables the use of  resource discovery, in-network caching, and congestion control  mechanisms at edge devices and  relays in ways that do not increase the complexity of the switching plane (i.e., the forwarding state required in IoT relays, and the signaling overhead incurred in updating forwarding state).

\section{ADN: A State-Light Forwarding Plane}
\label{sec-adn}

Addressable Data Networking (ADN) is based on the realization that the content-delivery limitations associated with the  IP Internet are not due to the use of addresses in datagram forwarding,  but rather from  the limitations in the forwarding and routing mechanisms being used and the way in which  names are mapped into addresses.

ADN  uses  Interests, data packets, and NACKs  to exchange content just like NDN does. The key difference between  ADN and NDN is that addresses rather than names are used to  forward Interests and send responses. 
The approach we describe is based on techniques we have introduced recently \cite{ifip2016, icn2016} and is simply intended to show that: (a) maintaining per-Interest forwarding state is not needed, and (b)  using addresses for data forwarding  is in fact more efficient  than using CO names. ADN could be instantiated in the context of other ICN architectures or even IP.

We use the term {\em anchor} to denote a router that, as part of the operation of  either a name-based routing protocol or a content-directory protocol, announces  all the content corresponding to a name prefix being locally available. 
If multiple mirroring sites host the content corresponding to a name prefix, then the routers attached to those sites announce the same name prefix. 
A   router  caching COs from a name prefix does not announce the name prefix in the name-based routing protocol or content-directory protocol.  Hence, an anchor of a name prefix is also an anchor of each CO corresponding to the name prefix.

Fig.~\ref{adn-fig} illustrates the forwarding approach adopted in ADN for the case of edge caching, i.e., 
content caching is done  only at those nodes connected directly  to consumers requesting content.
In this case, the information plane resides at edge systems, which are either servers or routers connected directly to content consumers. The switching plane resides in all routers and switches.

A router uses  a FIB and a {\em  source address table} (SAT) that replaces the  PIT. In addition, a router may also maintain a content store (CS) to cache content, and a {\em directory information base} (DIB) to assist with the binding of name prefixes to addresses.  Routers that receive Interests from consumers or send responses to consumers implement all the functions illustrated in the figure, while routers that simply forward Interests and responses from other routers (i.e., relays) need to implement only  those functions shown in the blue-shaded areas.   The functionality corresponding to the information plane can be implemented at edge routers or edge servers, depending on the specific IoT deployment, while the simpler switching-plane functionality is implemented at each router.

\begin{figure}
\includegraphics[height=2.5in, width=3.5in]{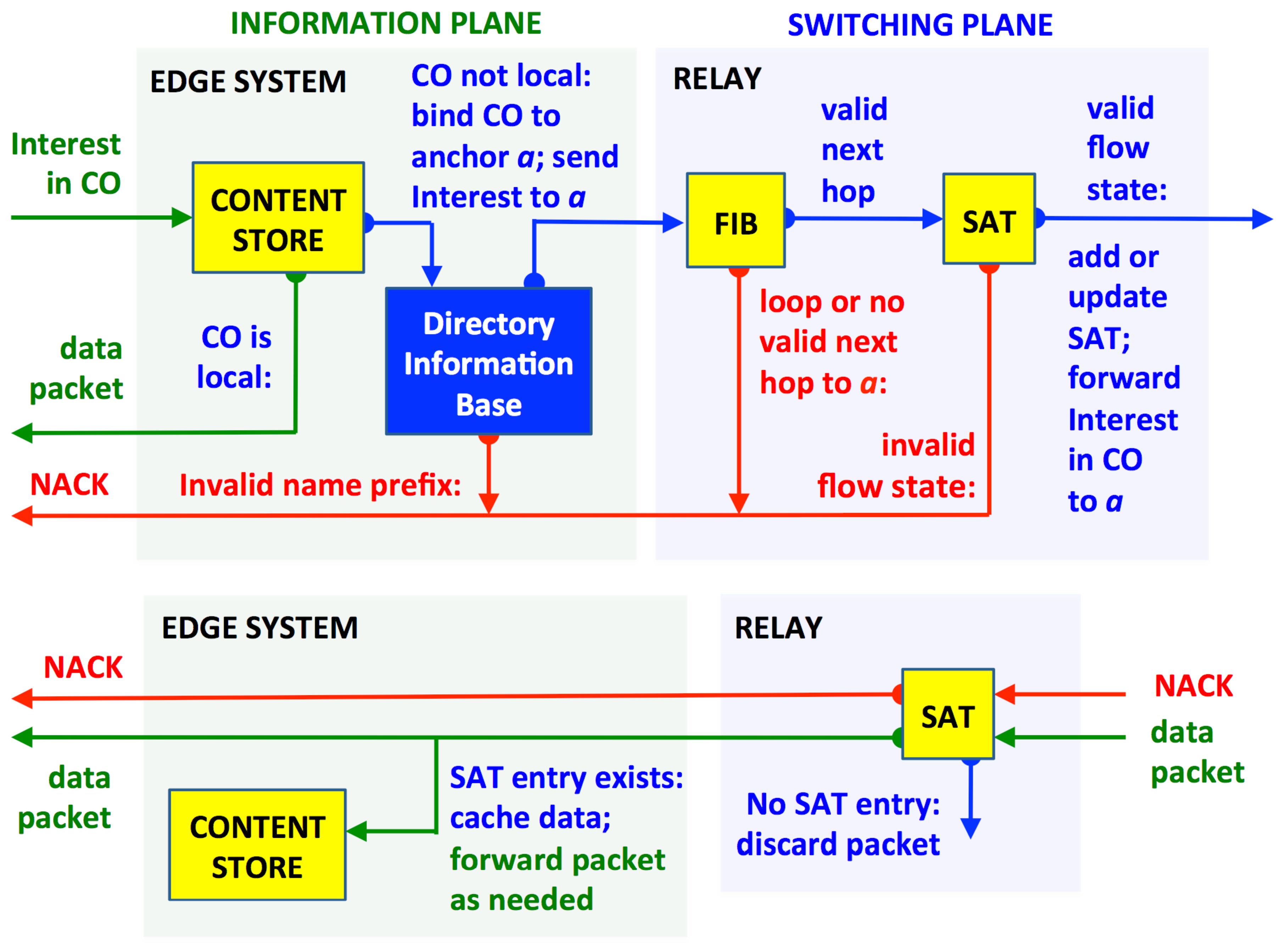}
\caption{Forwarding in ADN}
   \label{adn-fig}
\end{figure}

The CS maintained by a router in ADN is the same as in NDN. However, 
the FIB maintained by a router does not list forwarding state for name prefixes, it  lists the next hops {\em and} distances through such interfaces to  anchors of name prefixes.  The information regarding which anchors store the content corresponding to  different name prefixes is stored  in the DIBs.

Consumers request  COs by name in the form of Interests that simply state CO names.
It is assumed that a naming convention is in place to allow routers to determine whether an Interest received from a consumer corresponds to a unicast or multicast flow from the name of the CO being requested. 

A router processing an Interest in a CO from a consumer determines whether  the CO being requested is cached in its CS, and if so resolves the request. If the CO is remote, the router must bind the CO name to the address of an anchor of a name prefix that is the best match for the CO name.
This is attained  using the information stored in the DIB, and the result is
that any Interest forwarded by a router to another router includes the address of an anchor for the CO being requested.

The binding of CO names to anchor addresses 
can be implemented in  a number of ways. 
We have recently proposed an approach \cite{icn2016} in which a name-based content routing protocol just like the one used in NDN or CCNx (e.g., NLSR \cite{nlsr} or DCR \cite{dcr}) populates  the DIB and the FIB at each router.  This approach 
disseminates all the name-prefix to anchor mappings to all routers and 
incurs the same overhead populating DIBs  and FIBs as NDN requires to populate FIBs listing name prefixes. However, it is also possible to use an address-based routing protocol to populate the FIBs with entries for addresses or address prefixes corresponding to anchors, and a separate directory update protocol to disseminate the mappings between name  prefixes and anchor addresses. 

The  {\em  source address table} (SAT) replaces the PIT and  stores an entry for each  Interest source address  for which Interests have been forwarded by the router, rather than forwarding state for each Interest.  This table is needed in the ADN architecture if routing information is maintained proactively only for anchor addresses, rather than for all network nodes as it is done in the IP Internet.

Prior results  \cite{caching}  indicate that most of the benefits derived from in-network caching can be attained just with edge caching;
however, on-path caching may still provide performance improvements over edge caching depending on the nature of the content.  Fig.~\ref{adn-cache-fig} illustrates that relay nodes in ADN can also use on-path caching, even if Interests state a target anchor. In this case, the information plane is instantiated in relay nodes as well as  in edge systems. A relay node receiving an Interest processes it by first comparing the CO name in the Interest to the CO names stored in its CS, and the Interest is forwarded to the anchor stated in the Interest if the CO is not stored locally, provided that there is a route to the intended  anchor and no forwarding loops exist.

\begin{figure}
\includegraphics[height=1.5in, width=3in]{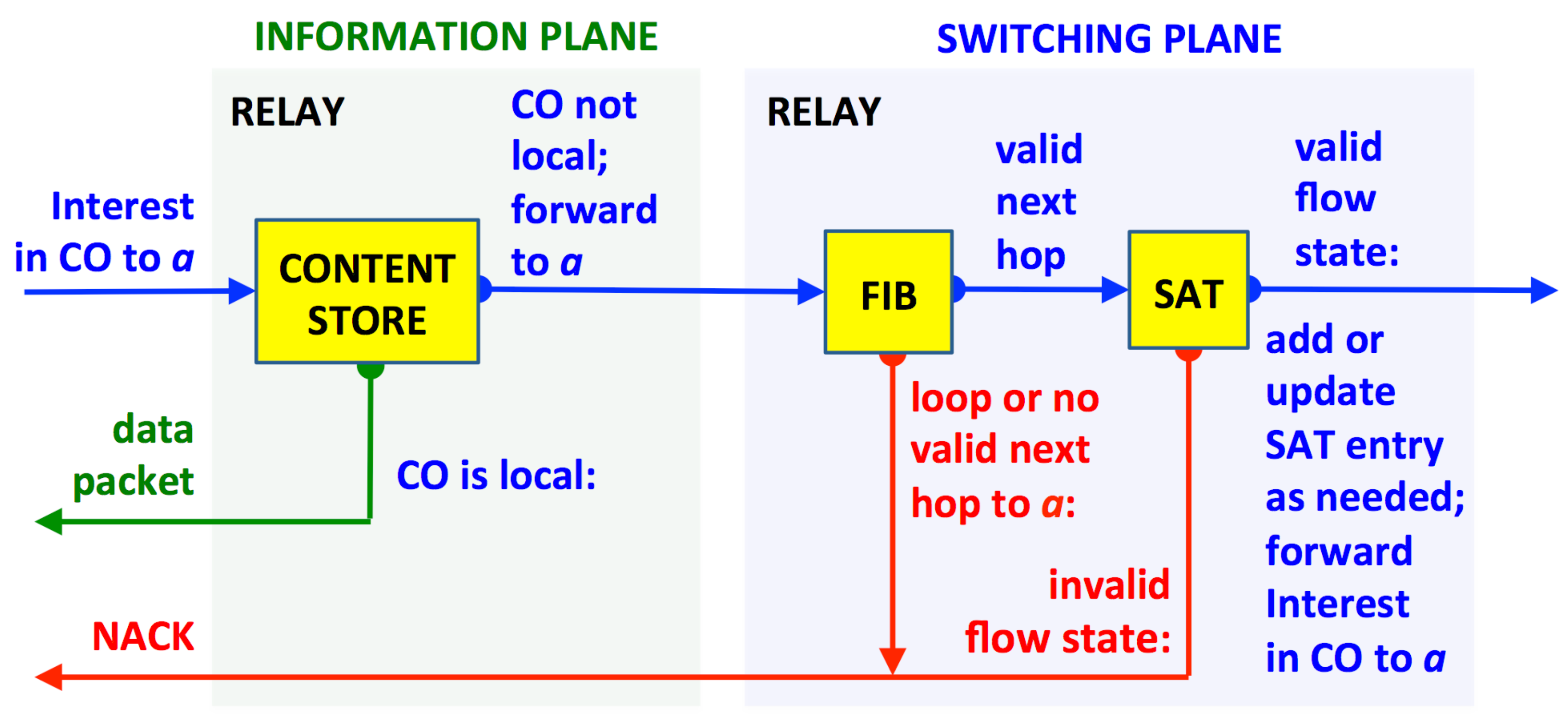}
\caption{Forwarding in ADN using on-path caching}
   \label{adn-cache-fig}
\end{figure}

In ADN, Interests that are part of  a unicast flow are forwarded with no aggregation, and caching of COs is the only mechanism used to suppress the forwarding of multiple Interests  from different consumers requesting the same CO. This design  decision  is based on the results of our recent analysis of Interest aggregation in NDN \cite{ali-ifip16}.
By contrast, Interests in a multicast flow need to be aggregated before COs flow back toward consumers. This is the case because Interests can then be used to build the multicast forwarding tree without the need for additional signaling in the control plane. In ADN, the name of the CO being requested gives an indication that the request corresponds to either a unicast or multicast flow, and a 
flow-state value is used to enable Interest aggregation for multicast flows in the SAT entries maintained by routers.

The distance to  anchor address $a$ 
through the interface to neighbor $k$ listed in  $FIB_i$ (the FIB of router $i$)  is denoted by  $h_i(a, k)$.  The SAT entry at router $i$ for Interest source address $s$ is denoted by $SAT_i (s)$ and states:  the address $s$  denoting either the source of a unicast Interest or a multicast group, a flow-state value ($f_i(s)$), and a set $O_i(s)$ of  one or more interfaces  towards the origin of Interests. 

Data packets and NACKs are similar to those used in NDN; however, a response to an Interest sent by router $k$ carries additional forwarding information, namely: an address denoting the source of the Interest that originated the response ($s^R(k)$), and  a  flow-state value ($f^R(k)$).

As we have noted, an Interest sent by a consumer $c$ simply states the name  $o$  of a CO, and is denoted by $I_c[o]$.  By contrast, an Interest sent by  router $k$ requesting a CO with name $o$  is denoted by $I[o, s^I(k), a^I(k), h^I(k), f^I(k)]$ and  
states the following:  The name of the requested CO ($o$), an address denoting the source of the Interest ($s^I(k)$),  the address of  an intended anchor of the CO ($a^I(k)$), a distance to the anchor ($h^I(k)$), and a flow-state value ($f^I(k)$).
The flow-state value carried in an Interest determines whether the Interest  is part of a unicast or multicast flow.  The value $f^I(k)= 0$  denotes a unicast flow, and  $f^I(k) >  0$  denotes a multicast flow.

As can be observed from Figs. \ref{ndn-fig} and \ref{adn-fig}, forwarding of responses to Interests in NDN and ADN are similar. Responses simply traverse  reverse paths using the traces stored in PITs in NDN and SATs in ADN.  However,  Interest forwarding in ADN is very different than in NDN. 

Let $S_i(a)$ denote the set of next hops to the address range that is the best match for anchor  address $a$  in $FIB_i$. 
Router $i$  processes  $I[o, s^I(k), a^I(k), h^I(k), f^I(k)]$ 
according to  the following rules.

\vspace{0.1in}
{\fontsize{9}{9}\selectfont
\noindent
{\bf Interest Forwarding Rule (IFR) at router $i$:}  \\
Forward   $I[o, $ $s^I(k), $ $a^I(k), $ $h^I(k), f^I(k)]$ if \\
$ \exists ~v  ~(~ v \in  S_i ( a^I(k) )  ~\wedge~ h^I(k)  > h_i(a^I(k), v) ~) ~ \wedge$
 \\
$[ ~(~f^I(k)= 0 ~) ~~\vee $
$( \not\exists ~SAT_i( s^I(k) ) ~ \wedge ~f^I(k) = 1 ~)  ~~\vee$  
\\
$~~~(~\exists ~SAT_i( s^I(k) ) ~ \wedge  ~f^I(k) = f_i( s^I(k) ) + 1 ~) ~~] $  
}

 \vspace{0.05in}
{\fontsize{9}{9}\selectfont
\noindent
{\bf Interest Aggregation Rule (IAR) at router $i$:}  \\
Aggregate  $I[o, $ $s^I(k), $ $a^I(k), $ $h^I(k), f^I(k)]$ if \\
$ \exists ~v  ~(~ v \in  S_i ( a^I(k) )  \wedge h^I(k)  > h_i( a^I(k) , v) ~)  ~ \wedge$
 \\
$[~~\exists ~SAT_i( s^I(k) ) ~ \wedge  ~f^I(k) = f_i( s^I(k) )  ~) ~~] $  
}

\vspace{0.05in}
{\fontsize{9}{9}\selectfont
\noindent
{\bf Interest Negation  Rule (INR) at router $i$:}  \\
Send a NACK to  $I[o, $ $s^I(k), $ $a^I(k), $ $h^I(k), f^I(k)]$ if \\
$ \not\exists ~v  ~(~ v \in  S_i (a^I(k) )  ~\wedge~ h^I(k)  > h_i( a^I(k), v) ~) ~ \vee$
 \\
$[~\exists ~v  ~(~ v \in  S_i (a^I(k))  \wedge h^I(k)  > h_i(a^I(k), v) ~)  ~ \wedge$
\\
$(~\exists ~SAT_i( s^I(k) ) ~ \wedge  ~f^I(k) \not= f_i( s^I(k) ) + 1 ~) ~~] $  
}
 
 \vspace{0.1in}
These three rules ensure that router $i$  forwards or aggregates an Interest only if it can find a valid next hop and the Interest has a valid flow state, and sends a NACK back otherwise (see Fig. 2).
A valid next hop for a router  is a neighbor  through which it has a distance to the intended anchor that is smaller than the distance stated in the Interest.  

The name of any CO  is assumed to contain information  that denotes whether the CO corresponds to a multicast service or not.  This informs an ingress router whether to forward a unicast or multicast Interest when it processes an Interest from a local consumer. 
A unicast Interest carries a a flow-state value of 0 and a source address that can have global scope as in the IP Internet today, or only local meaning as we have discussed in recent  content-centric networking approaches \cite{icnc16, ifip2016, icn2016}.
A multicast Interest carries a flow-state value of at least 1 and a source address that denotes a multicast group. The first Interest in a multicast flow
states a flow-state value of 1. 

A multicast Interest  is forwarded  if it carries a flow-state value indicating the next CO expected from the multicast source, and it is aggregated if it carries a flow state of equal value to the one stored in the SAT entry for the multicast group. Different   source-pacing mechanisms are possible using flow-state information, such as those presented in \cite{globe2016, icnc2017}.

The address of the multicast group  $a(g)$ is simply an identifier used to denote the receivers of the group and forward responses back to them. In practice, $a(g)$ can obtained directly from the name $g$ (e.g., by having the address be part of the group name or by defining a has function of the name).

\section{Multipath Data Retrieval  
\\ and Native Support for Multicast }
\label{sec-loop}

\subsection{Deadlocks in NDN Unicast and Multicast Forwarding}

According to NDN, a router determines that an Interest has traversed a loop when it receives an Interest for which an entry exists in its PIT stating the same name and nonce carried in the Interest, and a router simply adds an incoming interface for an Interest stating a CO name if the nonce is different than those it has stored in its PIT for the same CO name.   

On the other hand, a name-based multipath routing protocol is used to populate the FIBs, and each router ranks the interfaces in a FIB entry into three classes \cite{ndn-fw2}: (a) {\em green}, which means that the interface can bring data back; (b) {\em yellow}, which states that it is unknown whether the interface may bring data back; and (c) {\em red}, which denotes the inability of brining data back through that  interface. 
This  ranking  is done independently of the routing protocol, which is only required to build routes based on long-term path characteristics.   Green interfaces are always preferred over yellow interfaces, and a  green interface turns yellow when a pending Interest times out, data stops flowing for some time, or a NACK stating  	``No Data" or ``Duplicate" is received. An interface is marked red when it goes down.

Based on the premise that Interests and hence data packets cannot loop in NDN, it has been assumed that  \cite{ndn-fw2}: 
(a) routers may try multiple alternative paths in Interest forwarding; (b) 
the routing protocol only needs to disseminate long-term changes in topology and policy, without having to address near-term changes; and (c) 
multicast trees are formed for data to return by the fact that  Interests from different sources requesting the same content are suppressed and  only the first Interests are forwarded towards the producers of content. 
Unfortunately, although the same Interest cannot traverse a forwarding loop and data packets cannot traverse loops,   NDN  is subject to 
{\em forwarding deadlocks} that are just as harmful. 

Fig.~\ref{ndn-loop-fig} shows a simple example of this  problem  using a small eight-router network in which  three consumers  ($a$, $b$, and $c$) request multicast content from a source $S$ advertising content for a multicast group  $g$.   
The colored directed links illustrate the interface ranking at each router for the FIB entry
for name prefix $g^*$.
Routers $x$ and $v$ have ranked their interfaces to $u$ as yellow, and router $u$ has ranked its interface to $w$ as yellow as well. 
Interests are forwarded to CO names $o_i$, which are part of a name prefix $g$ used to denote all the content for  multicast group $g$. In turn, the name $g$ is part of a name prefix $g^*$ for which routers have FIB entries. 

The interfaces from $x$ to $k$ and from $v$ to $m$ go down at times $t_1$ and $t_2$, respectively, and the three consumers of the multicast group send Interests requesting more content around the same instant $t_3$. A dashed line indicates an Interest being submitted or forwarded, the time when it is sent, and its origin.

\begin{figure}
\includegraphics[height=1.5in, width=3in]{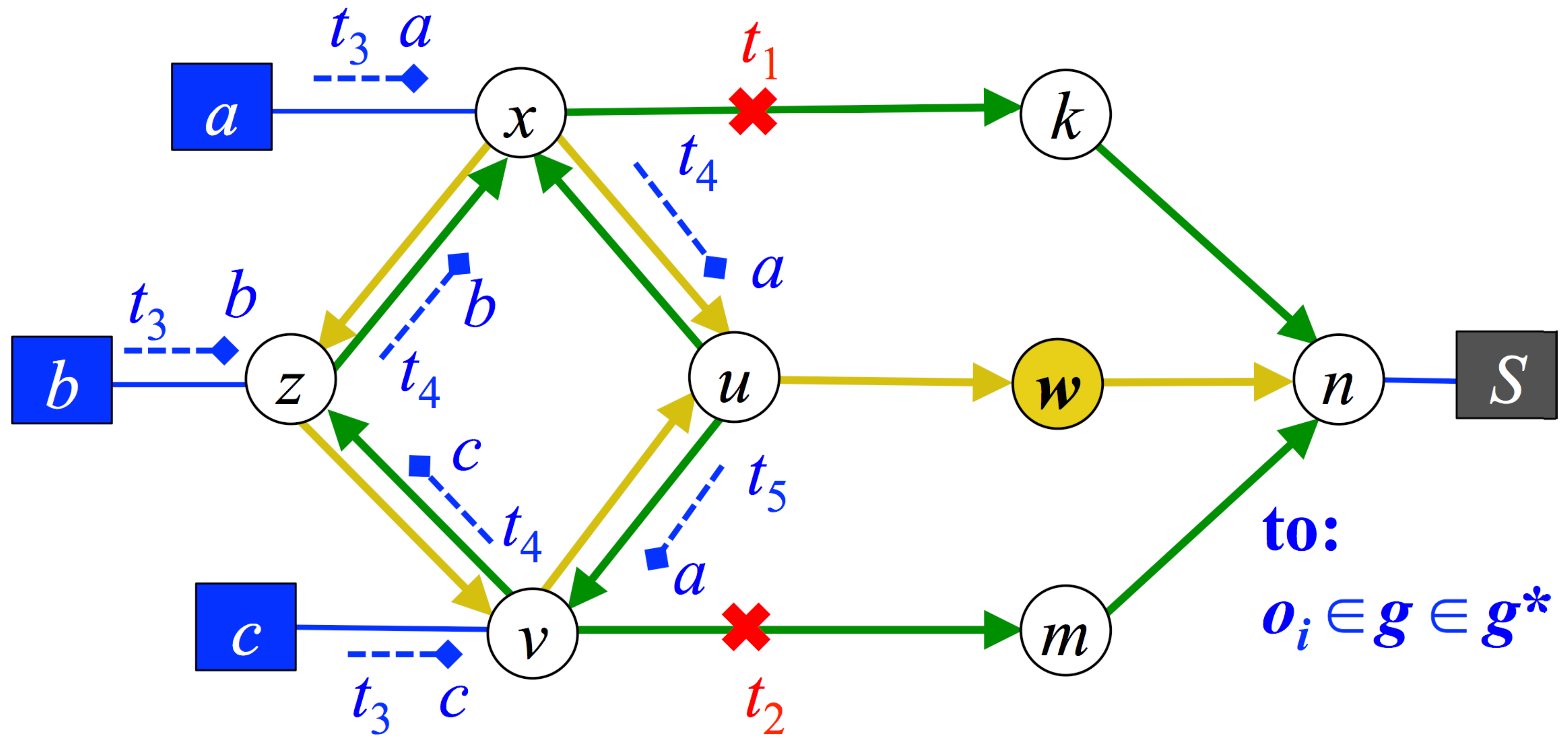}
\caption{Forwarding deadlock  in NDN}
   \label{ndn-loop-fig}
\end{figure}

When router $x$ receives the Interest from consumer $a$, it can use any of its yellow interfaces for $g^*$, which is the best match for any CO name in multicast group $g$. Accordingly, it  forwards  the Interest to router  $u$ at time $t_4$.  
Concurrently, routers $v$ and $z$ forward the Interests they receive using the green interfaces they have for the FIB entry that is the best match for the CO name. Router  $u$ forwards the Interest originated at $a$ to router $v$ at time $t_5$, because it has a  green interface for name prefix $g^*$.

Router $x$ aggregates the Interest   received from router $z$ after time $t_4$, because it has a pending Interest for the same CO name and a different nonce in its PIT. Similarly, router $z$ aggregates the Interest received from router $v$ after time $t_4$, and router $v$ aggregates the Interest received from router $u$ after time $t_5$. It is clear that, although no one Interest traverses the forwarding loop $x \rightarrow u \rightarrow v \rightarrow z \rightarrow x$, the Interests from $a$, $b$, and $c$ are aggregated along the loop, which results in a forwarding deadlock. The Interests stored in the PITs of routers $x$, $v$, and $z$ cannot be satisfied with a data packet or a NACK, even though there is a viable path to the source $S$. The Interests must wait in the PITs until their lifetimes expire.

This deadlock  problem is amplified when  Interest pipelining is used, because pipelined Interests from each consumer suffer the same fate until the routing protocol  corrects the forwarding inconsistencies that caused the deadlock.
In our example, the forwarding deadlock was caused by the failure of the interface from $x$ to $k$ and from $v$ to $m$. However, the same forwarding deadlock  could occur if the two interfaces simply became yellow, given the NDN forwarding rules.

The deadlock problem  in NDN  stems from the interaction between the method used by  routers to detect Interest looping, the aggregation of Interests requesting the same content,   and the way in which interfaces are ranked at each router.  
We have shown \cite{ifip2015} that attempting to detect looped Interests using names and nonces  fails to address the fact that  Interests can be  aggregated along forwarding loops, which may result in Interests waiting   for responses that never come and may prevent multicast trees for data to be created. The cause of possible deadlocks in NDN is 
the lack of  ordering among the FIB entries for the same destinations at different routers.

\subsection{Loop-Free and Deadlock-Free Unicast and Multicast  Forwarding in ADN}

We have proven  that  no forwarding loops can occur as long as a router accepts an Interest or datagram only if it carries a distance that is larger than the distance the router has to the intended destination through any of its neighbors  \cite{ifip2015, ancs2015}.
IFR, IAR, and INR implement this sufficient condition for  forwarding and aggregation of Interests. 

Fig.~\ref{adn-loop-fig} illustrates how ADN forwards Interests without forwarding loops or deadlocks using the same example of Fig.~\ref{ndn-loop-fig}. The numbers next to each node in Fig.~\ref{adn-loop-fig}  indicate the distance listed in the FIB of the router for address prefix $n^*$, which is the best match for  the address of the anchor for multicast group $g$.

\begin{figure}
\includegraphics[height=1.5in, width=3in]{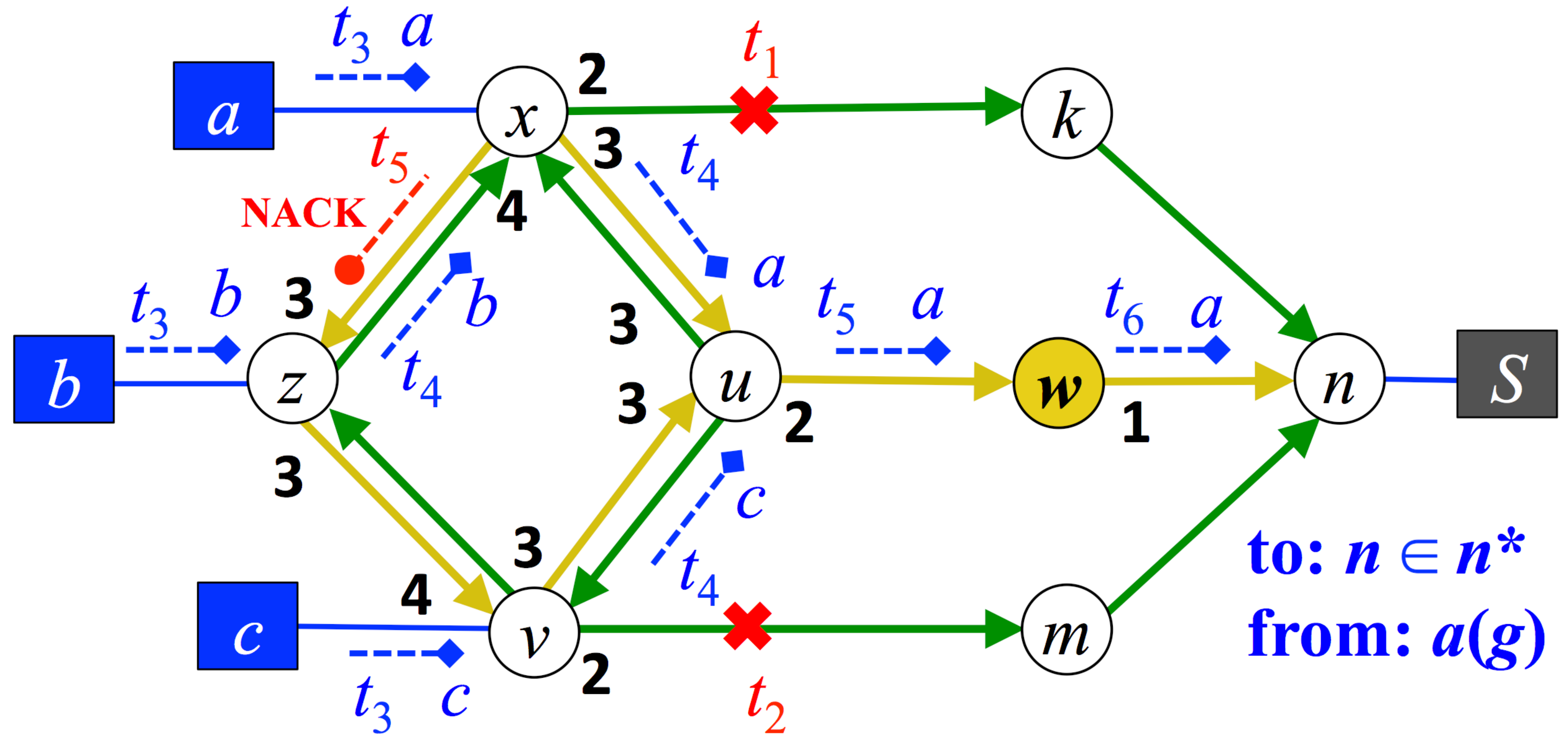}
\caption{No forwarding deadlocks or loops can occur  in ADN}
  \label{adn-loop-fig}
\end{figure}

To have a level playing field in our comparison between ADN and NDN, we assume as in  \cite{icn2016} that routers use a name-based-content routing protocol  similar to the one used in NDN  to populate  their FIBs with the next hops and distances to the addresses of anchors of name prefixes, as well as to populate their DIBs with the mappings of name prefixes to anchor addresses. The number of entries in a DIB is the same as the number of FIB entries in NDN; however, only an ingress routers looks up its DIB before it forwards an Interest; relays simply use their FIBs and SATs. 
We also assume that  routers use  interface rankings similar to those used in NDN to decide how to forward Interests.

Routers $x$, $z$, and $v$ consult their DIBs to bind  the name of the CO being requested to the anchor address $n$ that should receive the Interests. Each CO name $o_i$ is part of a name prefix $g$ for which a best match $g^*$ exists in the DIB.
The multicast group address $a(g)$ is obtained from the group name $g$  itself. 

Routers $x$, $z$ and $v$ forward the Interests they receive from local consumers along the best next hop they find to $n$ at time $t_4$ as shown in Fig.~\ref{adn-loop-fig}. The first Interests  forwarded by the routers for the multicast flow in the example state: 
\\ 
$~~~~~~I[o_i, s^I(x) = g(a), a^I(x) = n, h^I(x)= 3 , f^I(x) \geq 1]$ 
\\
$~~~~~~I[o_i, s^I(z) = g(a), a^I(z) = n, h^I(z)= 3 , f^I(z)  \geq 1]$ 
\\
$~~~~~~I[o_i, s^I(v) = g(a), a^I(v) = n, h^I(v)= 3 , f^I(v) \geq 1]$ 

\vspace{0.05in}
Following INR, router $x$ must send a NACK to the Interest it receives from router $z$ at time $t_5$, because   $h^I(z) = 3 \not> 3 = h_x(n, u)$.
Router $u$ forwards the Interest it receives from $x$ according to IFR at time $t_5$, because because   $h^I(x)  = 3 > 2 = h_u(n, w)$.
Following IAR, router $u$ aggregates the Interest it receives from $v$ because it has created a SAT entry for the multicast address $a(g)$ with $f_u(a(g))= 1$,
$h^I(v)  = 3 > 2 = h_u(n, w)$, and $f^I(v) = f_u(a(g))$. As a result, a single  Interest propagates to $n$ with each router forwarding the Interest creating a SAT entry for $a(g)$. 

A data packet is sent back along the paths stated in the SATs of routers $n$, $w$, $u$, $x$, and $v$ for $a(g)$. Router $z$ propagates a NACK to consumer $b$, which must retransmit its Interest. Eventually, the FIB entries for $n^*$ are corrected 
to state $h_z(n, x) = 4$, which allows $z$ to forward  Interests from $b$ regarding group $g$ through router $x$ or $v$.  Even though consumer $b$ is forced temporarily to retransmit its Interests, each Interest is guaranteed to receive a data packet or a NACK.

The following theorems show that ADN enforces  loop-free forwarding of Interests and Interest aggregation without the possibility of forwarding deadlocks.
The results  are independent of whether the network is static or dynamic, the 
use of in-network caching, or the retransmission and congestion-control strategy used. 

\begin{theorem}
\label{theo1}
Unicast Interests  cannot be forwarded along loops in a network in which ADN is used.
\end{theorem}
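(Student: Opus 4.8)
The plan is to prove the contrapositive-flavored statement directly, via strict monotonicity of the distance field carried by the Interest. First I would specialize the forwarding rules to the unicast case: since a unicast Interest has $f^I(k)=0$, the bracketed flow-state disjunction in IFR holds automatically, so the decision at router $i$ reduces to the distance test alone, namely $i$ forwards $I[o,s^I(k),a^I(k),h^I(k),0]$ iff it has a next hop $v\in S_i(a^I(k))$ with $h_i(a^I(k),v)<h^I(k)$; and IAR never fires for a unicast flow, so the forwarding events for a single unicast Interest form one walk $n_0\to n_1\to n_2\to\cdots$ in the network graph with no branching. I would also state explicitly the stamping rule left implicit in the rule boxes: following the worked example of Fig.~\ref{adn-loop-fig}, where $x$, $u$, and $w$ each re-stamp the Interest with their own distance to the anchor $n$, router $n_j$ forwards the Interest carrying $h^I(n_j)=h_{n_j}(a)$, its distance to the intended anchor $a=a^I$ (equivalently $\min_{v\in S_{n_j}(a)}h_{n_j}(a,v)$).

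Next I would extract the invariant: whenever $n_{j+1}$ forwards the Interest, the distance it received is $h^I(n_j)=h_{n_j}(a)$, and the forwarding test is precisely $h_{n_j}(a)>\min_{v\in S_{n_{j+1}}(a)}h_{n_{j+1}}(a,v)=h_{n_{j+1}}(a)=h^I(n_{j+1})$. Hence along the forwarding walk the sequence $h_{n_0}(a),h_{n_1}(a),h_{n_2}(a),\dots$ is strictly decreasing. Since these are real-valued FIB distances, a strictly decreasing sequence cannot repeat a value; and since, for a fixed routing state, $h_{n_j}(a)$ is a function of the node $n_j$, no node can occur twice in the walk. Therefore the forwarding walk of any unicast Interest is a simple path, i.e., the Interest is never forwarded along a loop, which is the statement of the theorem; this also exhibits IFR as an instance of the sufficient condition for loop-freedom already established in the cited prior work \cite{ifip2015, ancs2015}.

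The step I expect to require the most care is the passage from a static routing snapshot to the dynamic network asserted in the theorem's preamble: when the FIB distances $h_i(a)$ change over time, ``$h_{n_j}(a)$ is a function of the node'' holds only per instant, so in principle a node could see the Interest a second time with a decreased $h_i(a)$ in between, and ruling this out needs the ordering/feasibility invariant the routing computation maintains across updates (the actual content of \cite{ifip2015, ancs2015}) rather than the one-line monotonicity argument. A secondary loose end is whether a router stamps the forwarded Interest with $h_i(a)$ or with $h_i(a,v)$ for the chosen next hop $v$; the argument survives either reading, since both quantities are strictly below the received distance whenever the Interest is forwarded, but only the first makes the ``function of the node'' phrasing literally clean, so I would fix the stamping convention up front and, if needed, remark that choosing a non-minimal next hop only shortens the walk.
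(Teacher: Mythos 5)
Your proof is correct and takes essentially the same route as the paper's: the paper derives from IFR the chain $h^I_L(v_{k-1}) > h_{v_k}(a, v_{k+1}) = h^I_L(v_k)$ around a hypothetical loop and concludes $h^I_L(v_k) > h^I_L(v_k)$, which is exactly your strict-monotonicity invariant recast as a proof by contradiction, with the stamping convention resolved your second way (the forwarded Interest carries the distance through the chosen next hop). The dynamic-network caveat you flag is real but is not treated any more carefully in the paper's proof, which tracks the values carried in the Interests around a single traversal of the loop and otherwise defers to \cite{ifip2015, ancs2015}.
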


\begin{proof}
Consider a network that uses ADN.  Assume for the sake of contradiction that  
routers  in a forwarding loop  $L$ of $h$ hops  $\{ v_1 , $ $v_2 , ..., $ $v_h , v_1 \}$  forward unicast 
Interests for CO $o$ along $L$, with no router  in $L$  detecting the incorrect forwarding of any of the Interests sent over the loop.

Let $a$ be the anchor selected for the Interests for CO $o$ forwarded over $L$, and
let $h^I_L (v_k)$ denote the value of $h^I (v_k)$ when node  $v_k$   forwards the Interest for CO $o$ to
router $v_{k+1}$ for $1 \leq k \leq h - 1$. Similarly, $h^I_L (v_h)$
denotes  the value of $h^I (v_h)$ when when node $v_h$ forwards the Interest for CO $o$ to router  $v_1 \in L$.

Given that  $L$ is formed  by assumption, router $v_k \in L$ must forward 
$I[o, s^I(v_k), a^I(v_k) =a, h^I_L(v_k), f^I(v_k)= 0]$ to router $v_{k+1} \in L$ for $1 \leq k \leq h - 1$, and router $v_h \in L$ must forward Interest 
$I[o, s^I(v_h), a, h^I_L(v_h), f^I(v_h)= 0]$ to router  $v_{1} \in L$. 

According to IFR, if router $v_k$  ($1 < k \leq h$)
forwards 
$I[o, s^I(v_k), $ $ a, h^I_L(v_k), $ $f^I(v_k)= 0]$ to router $v_{k+1}$ as a result of receiving  
$I[o, s^I(v_{k - 1}),$ $a, $ $h^I_L(v_{k - 1}), f^I(v_{k - 1})= 0]$ from router $v_{k-1}$, then  it must be true that  
$h^I_L(v_{k - 1})  > h_{v_k}(a, v_{k+1} ) = h^I_L(v_{k})  $.
Similarly, if router $v_1$ 
forwards Interest 
$I[o, s^I(v_1), $ $ a, h^I_L(v_1), $ $f^I(v_1)= 0]$ to router  $v_{2}$ as a result of receiving Interest 
$I[o, s^I(v_h), $ $ a, h^I_L(v_h), $ $f^I(v_h)= 0]$ from router $v_{h}$, then  it must be true that  
$h^I_L(v_{h})  > h_{v_1}(a, v_{2} ) = h^I_L(v_{1})  $.

However, the above results  constitute a contradiction, because they require that  
$h^I_L(v_{k})  > h^I_L (v_{k}) $ for $1 \leq k \leq h$. Therefore, the theorem is true.  \end{proof}

\begin{theorem}
\label{theo2}
Multicast  Interests  cannot be forwarded or aggregated along loops in a network in which ADN is used.
\end{theorem}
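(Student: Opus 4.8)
The plan is to reuse the skeleton of the proof of Theorem~\ref{theo1} — assume a loop, exhibit a strictly decreasing cyclic chain of distances to the anchor, and contradict it — while adapting it so that it also covers Interest \emph{aggregation} (rule IAR), not only forwarding (rule IFR), and so that it accounts for the flow-state conditions that distinguish multicast from unicast. First I would assume, for contradiction, that there is a loop $L = \{v_1, v_2, \ldots, v_h, v_1\}$ such that each $v_k$ forwards or aggregates multicast Interests for some CO $o$ ``toward'' its successor $v_{k+1}$ in $L$ (indices taken modulo $h$), with no router in $L$ detecting incorrect forwarding, and I would let $a$ be the common anchor carried by all those Interests, exactly as in Theorem~\ref{theo1}. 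For a router $v_k$ that forwards, $v_{k+1}$ is the next hop it uses and $h^I_L(v_k)$ is the distance it writes into the Interest, so $h^I_L(v_k) = h_{v_k}(a, v_{k+1})$; for a router $v_k$ that aggregates, it can do so only because it already holds a SAT entry $SAT_{v_k}(s^I(v_{k-1}))$, which it created by having earlier forwarded an Interest for the same multicast flow through a genuine next hop $v_{k+1}$, and I again set $h^I_L(v_k) = h_{v_k}(a, v_{k+1})$. Because the flow is multicast, $f^I(v_k) > 0$ throughout, so the unicast branch $f^I = 0$ of IFR is never used here; this is why the present argument, though structurally identical to Theorem~\ref{theo1}, must invoke IAR and not IFR alone.

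The heart of the argument is that IFR and IAR share their first conjunct verbatim: to forward \emph{or} to aggregate the Interest received from $v_{k-1}$ — which carries distance $h^I_L(v_{k-1})$ — router $v_k$ must possess a valid next hop $v \in S_{v_k}(a)$ with $h^I_L(v_{k-1}) > h_{v_k}(a, v)$, and for a router of the loop that valid next hop is $v_{k+1}$, so $h^I_L(v_{k-1}) > h_{v_k}(a, v_{k+1}) = h^I_L(v_k)$. Since this holds for every $k$, chaining it around $L$ yields $h^I_L(v_1) > h^I_L(v_2) > \cdots > h^I_L(v_h) > h^I_L(v_1)$, i.e. $h^I_L(v_k) > h^I_L(v_k)$, a contradiction, so no such $L$ exists. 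I would add one sentence noting that the flow-state value opens no loophole: whichever of the second or third disjunct of IFR, or the aggregation condition IAR, is the rule actually applied at $v_k$, each is guarded by the same distance conjunct, so the contradiction is forced regardless of how the flow-state values line up along $L$. I would also point out that this is exactly what rules out the NDN-style forwarding deadlock of Fig.~\ref{ndn-loop-fig}: Interests aggregated around the loop $x \to u \to v \to z \to x$ would require every hop to satisfy that distance conjunct, which is impossible.

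The step I expect to be the main obstacle is making the aggregation case airtight, because an aggregating router emits no fresh Interest and does not, at the instant it aggregates, newly ``choose'' $v_{k+1}$; I would handle it as sketched — an aggregation at $v_k$ presupposes a SAT entry for the flow, which presupposes a prior IFR-governed forwarding through a real next hop, while IAR itself independently re-verifies that a valid next hop exists for the Interest being aggregated — so the distance conjunct is available at $v_k$ in either case, and the delicate point is only that the IAR witness can be taken consistently to be the recorded next hop $v_{k+1}$ toward the anchor. A subsidiary care point, which the theorem statement already anticipates by claiming the result holds ``independent of whether the network is static or dynamic,'' is that the $h_{v_k}(a, v_{k+1})$ values must be read at the instants of the corresponding forwarding or aggregation events; as in Theorem~\ref{theo1}, I would treat the loop as a claim about a concrete cyclic collection of such events, each governed by IFR or IAR with the distances in force at that event, which is exactly what the chained inequality consumes.
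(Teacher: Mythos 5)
Your proposal is correct and follows essentially the same route as the paper's own proof: assume a loop, observe that IFR and IAR share the identical distance conjunct so every router in the loop (whether it forwards or aggregates) must satisfy $h^I_L(v_{k-1}) > h_{v_k}(a,v_{k+1}) = h^I_L(v_k)$, and chain this around the cycle to reach $h^I_L(v_k) > h^I_L(v_k)$. Your extra care in justifying what $h^I_L(v_k)$ means at an aggregating router (via the SAT entry created by a prior forwarding) is a point the paper glosses over, but it does not change the argument.
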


\begin{proof}
We consider a network that uses ADN and assume  that  
routers  in a forwarding loop  $L$ of $h$ hops  $\{ v_1 , $ $v_2 , ..., $ $v_h , v_1 \}$  forward or aggregate multicast 
Interests for group $g$ along $L$, with no router  in $L$  detecting the incorrect forwarding of any of the Interests sent over the loop.

As in the proof of Theorem 1, $h^I_L (v_k)$ denotes the value of $h^I (v_k)$ when node  $v_k$   forwards the Interest for CO $o$ to
router $v_{k+1}$ for $1 \leq k \leq h - 1$ and  $h^I_L (v_h)$
denotes  the value of $h^I (v_h)$ when when node $v_h$ forwards the Interest for CO $o$ to router  $v_1 \in L$, and $a$ denotes the  anchor selected for the Interests for group $g$ forwarded over $L$.

Because no node in $L$ detects the incorrect forwarding  of an Interest, each 
router  in $L$ must aggregate the Interest it receives from the previous hop in $L$ or it must forward the Interest as a result of the Interest it receives from the previous hop in $L$. 

Consider the case in which a router aggregates the Interest it receives from the previous hop along $L$.
In this case IAR must be satisfied at the router.
According to IAR, if $v_k$  ($1 < k \leq h$) aggregates  Interest
$I[o, s^I(v_{k - 1}),$ $a, $ $h^I_L(v_{k - 1}), f^I(v_{k - 1}) > 0]$ received from router $v_{k-1}$, then  it must be true that  
$h^I_L(v_{k - 1})  > h_{v_k}(a, v_{k+1} ) = h^I_L(v_{k})$.
Similarly, if $v_1$ aggregates Interest 
$I[o, s^I(v_h), $ $ a, h^I_L(v_h), $ $f^I(v_h) > 0]$ received from router $v_{h}$, then  it must be true that  
$h^I_L(v_{h})  > h_{v_1}(a, v_{2} ) = h^I_L(v_{1})  $.

On the ether hand, if a router forwards the Interest it receives from the previous hop along $L$, then IFR must be satisfied.
According to IFR, if router $v_k$  ($1 < k \leq h$)
forwards 
$I[o, s^I(v_k), $ $ a, h^I_L(v_k), $ $f^I(v_k) > 0]$ to router $v_{k+1}$ as a result of receiving  
$I[o, s^I(v_{k - 1}),$ $a, $ $h^I_L(v_{k - 1}), f^I(v_{k - 1}) > 0]$ from router $v_{k-1}$, then  it must be true that  
$h^I_L(v_{k - 1})  > h_{v_k}(a, v_{k+1} ) = h^I_L(v_{k})  $.
Similarly, if router $v_1$ 
forwards Interest 
$I[o, s^I(v_1), $ $ a, h^I_L(v_1), $ $f^I(v_1) > 0]$ to router  $v_{2}$ as a result of receiving Interest 
$I[o, s^I(v_h), $ $ a, h^I_L(v_h), $ $f^I(v_h) > 0]$ from router $v_{h}$, then  it must be true that  
$h^I_L(v_{h})  > h_{v_1}(a, v_{2} ) = h^I_L(v_{1})  $.

The above argument renders a contradiction, because  it implies that 
$h^I_L(v_{k})  > h^I_L (v_{k}) $ for $1 \leq k \leq h$. Therefore, the theorem is true.  \end{proof}

For forwarding deadlocks to be avoided, either a data packet with the requested content or a NACK must be received by  the consumer who issued an Interest. 
It follows from Theorems 1 and 2 that no Interest can be forwarded or aggregated along a forwarding loop. Furthermore, according to INR, a router sends a NACK towards the source of an Interest if IFR and IAR are not satisfied, i.e., if no valid forwarding state exists. Accordingly, as long as the consumer uses a valid flow state and there is a viable path between a consumer and a producer or caching site storing the CO requested in an Interest, the consumer must receive the CO.

\section{Performance Comparison}
\label{sec-perf}

We compare ADN and NDN using simulations based on the NDNSim simulation tool.  The performance metrics used for comparison 
are  the average number of forwarding entries needed and  the average end-to-end delayincurred.

The network topology consists of 200 nodes distributed uniformly in a 100m $\times$ 100m area and nodes with distance of 12m or less are connected with point-to-point links of delay 15ms. The data rates of the links are set to 1Gbps to eliminate the effects that a sub-optimal implementation of  CCN-GRAM or NDN may have on the results. 

Each node corresponds to a router and all routers have local   producers and consumers of content, which is the worst-case scenario for ADN.
Interests are generated with a Zipf distribution with  parameter $\alpha = 0.7$ and
producers are assumed to publish 1,000,000 different COs.

We considered {\it total Interest  rates per router} of  50, 100, 200, and 500 objects per second corresponding to the sum of Interests from all local users. The increasing values of total request rates can be viewed as higher request rates from a constant user population of local active users per router, or an increasing  population of active users per router. 

We considered  on-path caching and edge caching. For the case of on-path caching, every router on the path traversed by a data packet towards a consumer caches the CO in its  content store. On the other hand, with edge caching, only the router directly connected to the requesting  consumer  caches the resulting CO.

 \subsection{Size of Forwarding Tables}

Figure \ref{sizePitVSSAT} shows the average size and standard deviation of the number of entries in PITs used in NDN and  SAT entries used in ADN as a function of Interest rates. As the figure shows, the size of PITs grows dramatically as the rate of content requests increases, which is expected given that PITs maintain per-Interest forwarding state. By contrast, the size of SATs remains fairly constant with respect to the content request rates. The figure also shows the average number of Interests received from local consumers pending a response.

For small request rates, the average number of entries in a SAT is actually larger than in a PIT. This is a consequence of using long timers (seconds) to delete SAT entries  independently of whether or not the routes they denote  are actually used by Interests or responses to them. By contrast,  a PIT entry is deleted immediately after an Interest is satisfied.  As the content request rates increase, the size of a PIT can be  more than 10  to 20 times the size of a SAT.
Interestingly, on-path caching offers only minor reductions in forwarding state compared to edge caching for both NDN and ADN.

\begin{figure}
\includegraphics[height=2in, width=3.3in]{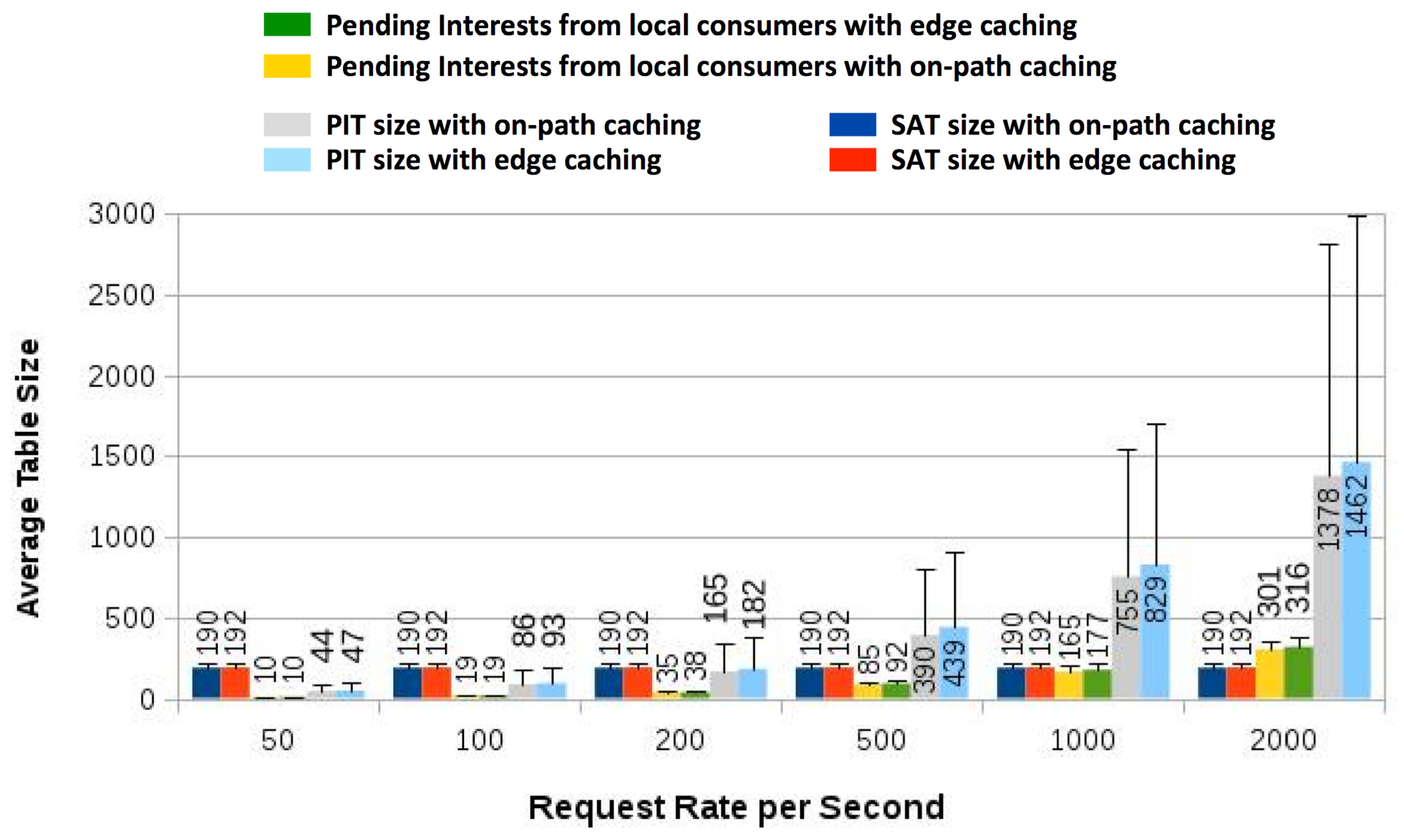}
\caption{Average size of forwarding tables }
   \label{sizePitVSSAT}
\end{figure}

\begin{figure}
\includegraphics[height=2in, width=3.3in]{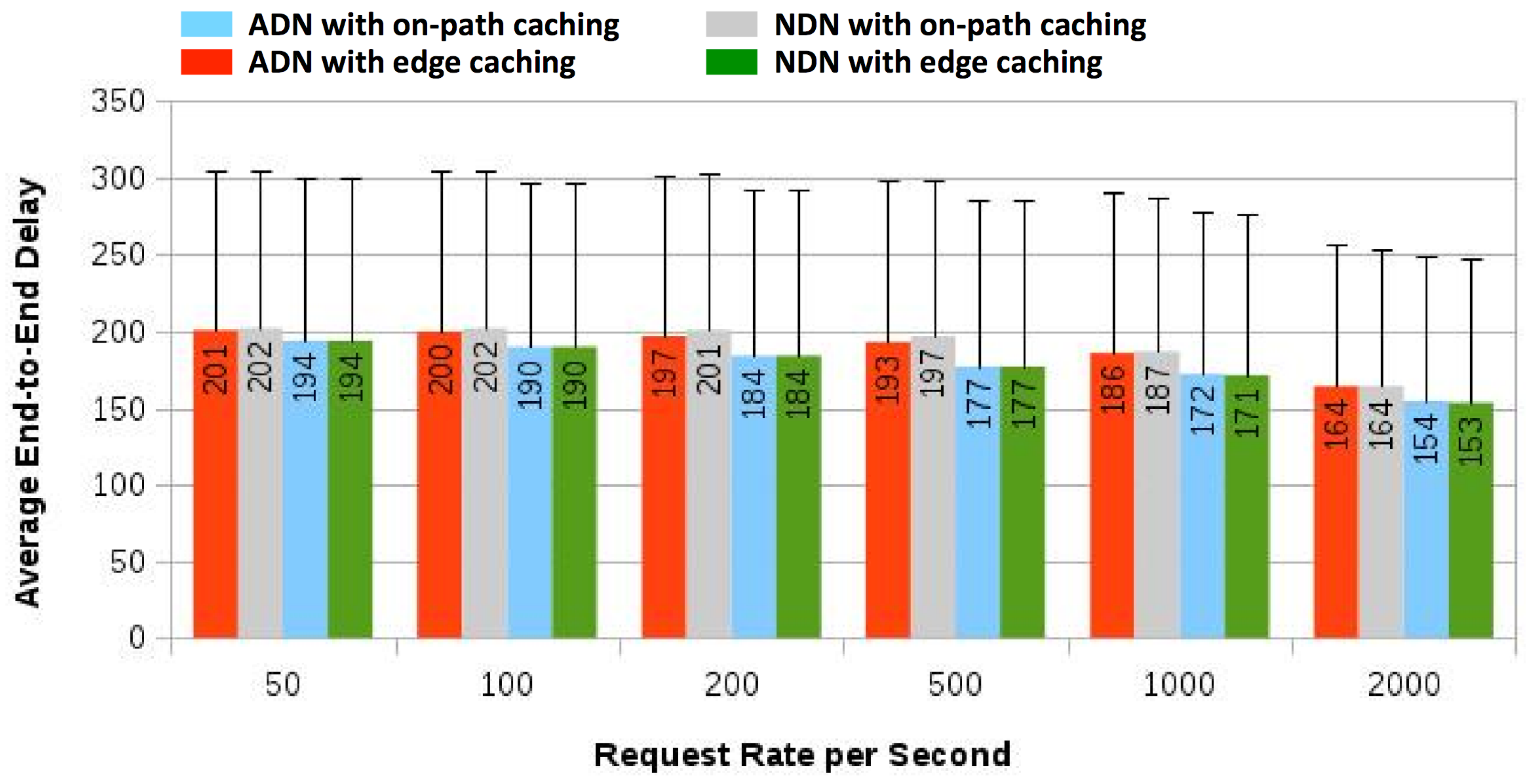}
   \caption{Average end-to-end delays}
   \label{delayADNvsNDN}
\end{figure}

 \subsection{Average Delays}

Figure \ref{delayADNvsNDN} shows the average end-to-end delay for NDN and ADN as a function of   content request rates for on-path caching and edge caching.  As the figure shows, the average delays for NDN and ADN are  essentially  the same for all values of the content request rates.  This should be expected, given that in the experiments the routes in the FIBs for NDN and ADN are static and loop-free.
These results indicate that the number of Interests processed by routers is very similar for NDN and ADN, which justifies the design decision of only using caching as the Interest suppression mechanism in ADN.

\section{Conclusions}
\label{sec-conc}

We argued that 
large  IoT deployments call for the separation of a switching plane in charge of data dissemination and an information plane in charge of discovering resources, content caching, and allowing applications to use expressive names. 
We introduced the Addressable Data Networking (ADN) architecture as an alternative to  the existing IP Internet architecture and NDN or CCNx for large IoT deployments. 
The objective in ADN is to take advantage of the best features of both IP and NDN by implementing a switching plane that works in much the sam sway as the forwarding plane of NDN and CCNx but without the additional overhead incurred in maintaining per-Interest forwarding state. 
We used  limited simulation experiments to highlight the savings in forwarding state attained with ADN compared to NDN, without sacrificing the effectiveness of the network to deliver content.
Effective solutions for content-centric networking in large-scale IoT deployments could be attained through proper modifications of NDN, CCNx, other ICN architectures, or even the IP Internet architecture itself along the lines of what we have proposed as ADN.

\section*{Acknowledgments}

This work was supported in part by the Jack Baskin Chair of Computer Engineering at UC Santa Cruz.

\balance

 \bibliographystyle{ACM-Reference-Format}


\end{document}